\newtheorem{lemma}{Lemma}
\title{A Packing Lemma for Polar Codes}
\author{\IEEEauthorblockN{Erdal Ar{\i}kan}
\IEEEauthorblockA{
	Bilkent University\\ Ankara, Turkey\\
Email: arikan@ee.bilkent.edu.tr}}
\def\cK{\mathcal{K}}
\def\cM{\mathcal{M}}
\def\cA{\mathcal{A}}
\def\cT{\mathcal{T}}
\def\cS{\mathcal{S}}
\def\cX{\mathcal{X}}
\def\cH{\mathcal{H}}
\def\cF{\mathcal{F}}
\def\cX{\mathcal{X}}
\def\cY{\mathcal{Y}}
\def\cC{\mathcal{C}}
\def\cE{\mathcal{E}}
\def\cD{\mathcal{D}}
\def\cB{\mathcal{B}}
\newcommand{\defn}{\mathrel{\stackrel{\Delta}{=}}}
\def\1{\mathbb{1}}
\def\F{\mathbb{F}}
\DeclareMathOperator{\supp}{supp}
\def\twonr{2^{\lceil nR\rceil}}
\def\tset{\cT_\epsilon^{(n)}}
\def\eqdot{\dot{=}}
\def\tx{\tilde{x}}
\def\ty{\tilde{y}}
\def\ts{\tilde{s}}
\def\ox{\overline{x}}
\begin{document}
\maketitle

\begin{abstract}
A packing lemma is proved using a setting where the channel is a binary-input discrete memoryless channel $(\mathcal{X},w(y|x),\mathcal{Y})$, the code is selected at random subject to parity-check constraints, and the decoder is a joint typicality decoder. The ensemble is characterized by (i) a pair of fixed parameters $(H,q)$ where $H$ is a parity-check matrix and $q$ is a channel input distribution and (ii) a random parameter $S$ representing the desired parity values. For a code of length $n$, the constraint is sampled from $p_S(s) = \sum_{x^n\in \cX^n} \phi(s,x^n)q^n(x^n)$ where $\phi(s,x^n)$ is the indicator function of event $\{s = x^n H^T\}$ and $q^n(x^n) = \prod_{i=1}^nq(x_i)$. Given $S=s$, the codewords are chosen conditionally independently from $p_{X^n|S}(x^n|s) \propto \phi(s,x^n) q^n(x^n)$. It is shown that the probability of error for this ensemble decreases exponentially in $n$ provided the rate $R$ is kept bounded away from $I(X;Y)-\frac{1}{n}I(S;Y^n)$ with $(X,Y)\sim q(x)w(y|x)$ and $(S,Y^n)\sim p_S(s)\sum_{x^n} p_{X^n|S}(x^n|s) \prod_{i=1}^{n} w(y_i|x_i)$. In the special case where $H$ is the parity-check matrix of a standard polar code, it is shown that the rate penalty $\frac1nI(S;Y^n)$ vanishes as $n$ increases. The paper also discusses the relation between ordinary polar codes and random codes based on polar parity-check matrices.
\end{abstract}
\IEEEpeerreviewmaketitle

\section{Introduction}

Packing and covering lemmas are basic building blocks of coding theorems in information theory.
The book by El Gamal and Kim \cite{EGK} exemplifies this; it relies on a small number of packing and covering lemmas (such as Lemma 3.1 \cite[p.~46]{EGK} and Lemma 3.3 \cite[p.~64]{EGK}) to prove a vast number of coding theorems for multi-terminal source and channel coding problems.
Unfortunately, the packing and covering lemmas used for proving theorems in a clean way rely on joint, or at least pairwise, independence among the codewords.
Joint or pairwise independence are too strong assumptions for various practical code ensembles, including those for polar codes.
The goal of this paper is to prove a packing lemma under less stringent conditions on the code ensemble.
The motivation behind this work is to develop packing and covering lemmas that are applicable to polar codes so that existing proofs based on standard code ensembles can be translated readily to similar proofs for polar codes. In this paper, we address only the packing problem. The results are preliminary. More work is needed to establish the desired links between random-coding methods and explicit polar code constructions.

In Sect.~\ref{Sect:Standard}, we review the random-coding method in the absence of any constraints. In Sect.~\ref{Sect:Constraints}, we extend the method of Sect.~\ref{Sect:Standard} to the case of random-coding subject to parity-check constraints.
In Sect.~\ref{Sect:Polar}, we further specialize the results to the case of parity-check matrices obtained from polar coding. The paper concludes in Sect.~\ref{Sect:Remarks} with a summary and remarks.

\section{Standard random-coding method}\label{Sect:Standard}
This section reviews the standard random-coding method.
We follow the presentation given in \cite[Sect.~3.1.2]{EGK} and, for the most part, adopt the notation and conventions there. 

Consider a communication system employing block coding over a discrete memoryless channel (DMC) $(\cX,w(y|x),\cY)$ with input alphabet $\cX$, output alphabet $\cY$, and transition probabilities $w(y|x)$, $x\in \cX$, $y\in \cY$.
Let $R$ denote the code rate, $n$ the length of the codewords, and $c=\{x^n(1),\ldots,x^n(\twonr)\}$
the code itself.
To send message $m$, one transmits the codeword $x^n(m)$ into the channel; in response, the channel
outputs a word $y^n$ with probability 
\begin{equation}\label{Eq:Channel}
w^n(y^n|x^n(m)) \defn \prod_{i=1}^n w(y_i|x_i(m));
\end{equation}
and, the decoder in the system maps $y^n$ to a decision $\hat{m}\in [1:\twonr] \cup \{e\}$ where $e$ is a special symbol indicating decoder failure. Here, the decoder is assumed to be a joint typicality decoder designed for a channel input-output ensemble $(X,Y)\sim q(x) w(y|x)$ where $q(x)$ is a given probability distribution on $\cX$. 
Given $y^n$, the joint typicality decoder outputs $\hat{m}(y^n)= j$ if $j$ is the unique message index in $[1:\twonr]$ such that 
$(x^n(j),y^n)\in \tset(X,Y)$; otherwise, the output is $\hat{m}=e$. 
Here, $\tset$ is defined as in \cite[p.~27]{EGK}, namely, as the set of all $(x^n, y^n)\in \cX^n\times \cY^n$ such that the inequalities 
$$
|\pi(x, y|x^n, y^n) - q(x)w(y|x)| \le \epsilon\, q(x)w(y|x)
$$
hold for each $(x, y) \in  \cX \times \cY$, where $\pi(x,y|x^n,y^n)$ is the fraction of times $(x,y)$ appears as a coordinate of $(x^n,y^n)$. 

In random-coding analysis of such a system, one regards the code $c$ as a sample of a random code $\cC$, drawn with probability
\begin{equation}\label{Eq:RC}
p_{\cC}(c) = \prod_{j=1}^{\twonr} q^n(x^n(j)),
\end{equation}
where $x^n(j)$ denotes the $j$th codeword in $c$ and $q^n(x^n)\defn \prod_{i=1}^nq(x_i)$.
The entire system is represented by an ensemble $(M,\cC,Y^n,\hat{M})$ with a probability assignment $p_{M,\cC,Y^n,\hat{M}}(m,c,y^n,\hat{m})$ of the form
\begin{equation}\label{Eq:Ensemble}
p_M(m)\, p_{\cC}(c)\,p_{Y^n|M,\cC}(y^n|m,c)\, p_{\hat{M}|\cC,Y^n}(\hat{m}|c,y^n),
\end{equation}
where $p_M(m)$ is uniform on $[1:\twonr]$, $p_{Y^n|M,\cC}(y |m,c)$ is given by \eqref{Eq:Channel} with $x^n(m)$ as the $m$th codeword of $c$, and $\hat{M}$ is a function of $(\cC,Y^n)$ as determined by the operation of the joint typicality decoder. 

Let $\cE = \{\hat{M}\neq M\}$ denote the error event and $P(\cE)$ the probability of error w.r.t. the above ensemble.
The goal of the random coding analysis is to show that, for any fixed $R < I(X;Y)$ with $(X,Y)\sim q(x)w(y|x)$, the probability of error $P(\cE)$ goes to zero as the block-length $n$ increases.
The analysis begins by observing that, due to symmetry,
$P(\cE) = P(\cE|M=1)$.
Then, one defines
$\cE_1 = \{(X^n(1),Y^n)\notin \tset\}$ and
$\cE_2 = \{(X^n(j),Y^n)\in \tset \ \text{for some $j\neq 1$}\}$, so that one can write
$P(\cE|M=1) = P(\cE_1 \cup \cE_2|M=1)\le P(\cE_1|M=1) + P(\cE_2|M=1)$.
By standard results in large-deviation analysis, it is observed that $P(\cE_1|M=1)$ goes to 0 (exponentially) in $n$.
For the second term, the union bound is used to write
\begin{align}
P(\cE_2|M=1) \le \sum_{j=2}^{\twonr} P(\cD_j|M=1)
\end{align}
where $\cD_j \defn \{(X^n(j),Y^n)\in \tset\}$;
then, a joint typicality lemma is invoked to bound each term in the union bound as 
\begin{equation}
P(\cD_j|M=1) \,\eqdot \,2^{-nI(X;Y)}, \quad j\neq 1,
\end{equation}
which establishes that $P(\cE_2|M=1) \eqdot 2^{n(R-I(X;Y))}$.
This completes the proof that $P(\cE)$ goes to zero (exponentially) in $n$ provided $R <I(X;Y)$.
If one chooses $q(x)$ as a distribution that maximizes $I(X;Y)$, one obtains a proof of achievability of the channel capacity $C \defn \max_{q(x)} I(X;Y)$.

\section{Random coding under constraints}\label{Sect:Constraints}
In this section, we consider the same channel coding problem as in Sect.~\ref{Sect:Standard} with the difference that here the code ensemble $\cC$  is subject to certain constraints. The target application of the method developed in this section is polar coding; however, for broader applicability and a wider perspective, initial formulation is given in a fairly general manner.

\subsection{Code generation under constraints}
The constraints on code generation will be represented by a parameter $s$ taking values over a space $\cS$.
We will consider codes of length $n$ and let $x^n\in \cX^n$ denote a generic channel input word of length $n$.
We will model the constraints by a function $\phi:\cS\times \cX^n \to \{0,1\}$
such that $\phi(s,x^n)=1$ iff $x^n$ satisfies the constraint $s$.
As a simple example, let $\cS =\{o,e\}$ and let $\phi(e,x^n) = 1$ iff the parity of $x^n$ is even and $\phi(o,x^n)=1$ iff the parity of $x^n$ is odd. A more general parity-check constraint will be treated in the next section.

We will say that a constraint functions $\phi$ is symmetric if there exists non-zero reals $(\alpha_s: s\in \cS)$ such that 
\begin{equation}\label{Eq:ConstraintSymmetry}
\sum_{s\in \cS} \alpha_s \phi(s,x^n) = 1, \quad \text{for all $x^n\in \cX^n$}.
\end{equation}
For example, the odd-even parity constraint is symmetric with $\alpha_s = 1$.
We will restrict attention to symmetric constraint functions.

The random code ensembles that we will consider will be denoted as $(S,\cC)$ with $S$ denoting a random constraint variable that takes values in $\cS$ and $\cC=\{X^n(1),\ldots,X^n(\twonr)\}$ denoting a code chosen at random subject to the constraint $S$. 
We take $q(x)$, the target channel input distribution, as given.
For any particular constraint $s\in \cS$ and code $c = \{x^n(1),\ldots,x^n(\twonr)\}$, we specify the probability assignment on $(S,\cC)$ as  
\begin{align}\label{Eq:RC2}
p_{S,\cC}(s,c) & = p_S(s) \prod_{m=1}^{\twonr} q_s(x^n(m))
\end{align}
where
\begin{equation}\label{Eq:rs}
p_S(s) \defn \alpha_s \sum_{x^n} \phi(s,x^n) q^n(x^n), \quad s\in \cS,
\end{equation}
and 
\begin{equation}\label{Eq:qs}
q_s(x^n) \defn \frac{\phi(s,x^n) q^n(x^n)}{\sum_{\tx^n} \phi(s,\tx^n)q^n(\tx^n)} ,\quad x^n\in \cX^n.
\end{equation}
Thus, the codewords $\{X^n(m)\}$ are selected in a conditionally i.i.d. manner from $q_s$, given the constraint $S=s$.
Note that the marginal distribution of individual codewords is given by 
\begin{equation}\label{Eq:rsqs}
p_{X^n(m)}(x^n) = \sum_s p_S(s) q_s(x^n) = q^n(x^n), \quad x^n\in \cX^n,
\end{equation}
which is in agreement with the target channel-input distribution.
Also note that the channel output follows a product-form distribution
\begin{equation}
p_{Y^n}(y^n) = t^n(y^n) \defn \prod_{i=1}^n t(y_i)
\end{equation}
with $t(y) \defn \sum_x q(x)w(y|x)$.

\subsection{Analysis of probability of error}
We now analyze the average performance of the constrained code ensemble defined by \eqref{Eq:RC2}.
As in Sect.~\ref{Sect:Standard}, we assume that the message random variable $M$ is uniformly distributed over $[1:\twonr]$ and that a joint typicality decoder is being used.
The joint ensemble for the system will be $(M,S,\cC,Y^n,\hat{M})$ with a probability assignment 
\begin{equation}\label{Eq:Ensemble2}
p_M(m)\, p_{S,\cC}(s,c)\,p_{Y^n|M,\cC}(y^n|m,c)\, p_{\hat{M}|\cC,Y^n}(\hat{m}|c,y^n),
\end{equation}
which is the same as \eqref{Eq:Ensemble}, except here the code ensemble is defined by \eqref{Eq:RC2}.
A property of this ensemble, which will be important in the sequel, is the independence of $(S,Y^n)$ and $M$.
This can be verified by writing
\begin{align*}
p_{S,Y^n|M}(s,y^n|m) & = \sum_{x^n} p_{S,X^n(m),Y^n|M}(s,x^n,y^n|m)\\
& = \sum_{x^n} p_S(s) q_s(x^n)w^n(y^n|x^n),
\end{align*}
and observing that the final sum is independent of $m$.

We now turn to the error analysis and define the error events $\cE$, $\cE_1$, $\cE_2$ as in Sect.~\ref{Sect:Standard}.
As before, by symmetry, we have $P(\cE) \le P(\cE_1|M=1)+P(\cE_2|M=1)$.
As in Sect.~\ref{Sect:Standard}, the first term $P(\cE_1|M=1)$ goes to zero exponentially in $n$. 
To bound the second term $P(\cE_2|M=1)$, we will use an argument involving the sets $\cD_j$ as defined in Sect.~\ref{Sect:Standard},
as well as the mutual information random variable
\begin{equation}\label{def:mutinf}
i(s;y^n)=\log\frac{p_{S,Y^n}(s,y^n)}{p_{S}(s)p_{Y^n}(y^n)} =\log\frac{p_{S,Y^n}(s,y^n)}{p_S(s)t^n(y^n)},
\end{equation}
and the event 
\begin{equation}\label{def:A}
\cA =\{i(S;Y^n) > n\gamma\}.
\end{equation}
The $\gamma$ in the definition of $\cA$ is a real number that will be specified later.
In terms of these, we have the following bound.
\begin{align*}
P(\cE_2|M=1) & = P(\cE_2\cap \cA|M=1) + P(\cE_2\cap \cA^c|M=1)\\
& \le P(\cA|M=1) + \sum_{j=2}^{\twonr} P(\cD_j\cap \cA^c|M=1)\\
& = P(\cA) + (\twonr-1) P(\cD_2\cap \cA^c|M=1),
\end{align*}
where in the last line we replaced $P(\cA|M=1)$ with $P(\cA)$ by noting that $\cA$, being an event defined in terms of $(S,Y^n)$, is independent of $\{M=1\}$.
We define $\cB$ as the set of all $(s,x^n,y^n)\in \cS\times \cX^n\times \cY^n$ such that $(x^n,y^n)\in \tset$ and $i(s;y^n) \le n\gamma$, and continue as follows.
\begin{align*}
P(\cD_2\cap \cA^c|M=1)
& = \sum_{(s,x^n,y^n)\in\cB} p_{S,Y^n}(s,y^n)q_s(x^n)\\
& \stackrel{(a)}{\le}  \sum_{(s,x^n,y^n)\in \cB }2^{n\gamma} p_S(s)t^n(y^n) q_s(x^n)\\
& \stackrel{(b)}{\le} \sum_{(s,x^n,y^n)\in \cS\times\tset} 2^{n\gamma}p_S(s)t^n(y^n) q_s(x^n)\\
& \stackrel{(c)}{=} \sum_{(x^n,y^n)\in \tset} 2^{n\gamma} t^n(y^n) q^n(x^n)\\
& \stackrel{(d)}{\eqdot}  2^{-n(I(X;Y)- \gamma)}
\end{align*}
where (a) follows by the fact that, for any $(s,x^n,y^n)\in \cB$, $p_{S,Y^n}(s,y^n) \le 2^{n\gamma} \,p_S(s)t^n(y^n)$, 
(b) by extending the range of the sum from $\cB$ to the larger set $\cS\times \tset$,
(c) by carrying out the sum over $s\in \cS$,
and (d) by the joint typicality lemma \cite[p.~43]{EGK}. 
Collecting the results, we have the bound
$$
P(\cE_2|M=1) \le P(\cA) + 2^{n (R-I(X;Y)+\gamma)}. 
$$
To keep the upperbound on $P(\cE_2|M=1)$ under control, we need a large enough $\gamma$ so that $P(\cA)$ is small, but also a rate $R$ smaller than $I(X;Y)-\gamma$. 
These two conflicting objectives put into evidence that there is a trade-off between performance and structure.
For a more quantitative asymptotic statement, consider a sequence of ensembles $\{(S_n,\cC_n)\}$ with each ensemble in the sequence having the same code rate $R$. Let $P_{e,n}$ denote
the probability of error for the $n$th ensemble. Let  
\begin{equation}\label{Eq:gamma}
\gamma^* =\inf\left\{ \gamma: \limsup_{n\to \infty} P\left(i(S_n;Y^n) > n\gamma\right) = 0\right\}.
\end{equation}
Then, $P_{e,n}$ goes to zero if $R < I(X;Y) -\gamma^*$.
If the sequence $\{(S_n,\cC_n)\}$ has a convergence property such as 
$$
\limsup_{n\to \infty} \left\{P\left(|i(S_n;Y^n) - I(S_n;Y^n)|\ge n\epsilon\right) \right\}= 0,
$$
for any fixed $\epsilon >0$, then we may take 
\begin{equation}\label{Eq:gamma2}
\gamma^*= \limsup_{n\to\infty} \left\{ \frac{1}{n}I(S_n;Y^n)  \right\}.
\end{equation}
In any case, it is apparent that the cost of placing constraints on the code is a rate penalty given by $\gamma^*$.
We summarize the above discussion as follows.
\begin{lemma}\label{Lemma:Constraints}
Let $\{(S_n,\cC_n)\}$ be a sequence of constrained code ensembles indexed by code length $n$, with each ensemble in the sequence defined by \eqref{Eq:RC2} and having a common rate $R$.
Let $P_{e,n}$ denote the probability of error for the $n$th ensemble, under joint typicality decoding.
Then, $P_{e,n}$ goes to zero as $n$ increases provided $R <I(X;Y)-\gamma^*$ where $\gamma^*$ is defined by \eqref{Eq:gamma}.
\end{lemma}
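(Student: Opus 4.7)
My plan is to follow the template of the standard random-coding proof of Sect.~\ref{Sect:Standard}, substituting the constrained ensemble~\eqref{Eq:RC2} and pinpointing the one step where pairwise codeword independence is used. By symmetry of~\eqref{Eq:RC2} in the codeword index, $P(\cE)=P(\cE\mid M=1)$, so I would split $\cE=\cE_1\cup\cE_2$ with $\cE_1,\cE_2$ as in Sect.~\ref{Sect:Standard}. The first term is handled exactly as before: the marginal law of $X^n(1)$ is $q^n$ by~\eqref{Eq:rsqs}, hence $(X^n(1),Y^n)\sim q^n(x^n)w^n(y^n|x^n)$ and $P(\cE_1\mid M=1)\to 0$ exponentially by large deviations on $\tset$.

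The substantive work is in bounding $P(\cE_2\mid M=1)$. The obstacle is that under~\eqref{Eq:RC2} distinct codewords are coupled through the shared constraint $S$, so the unconditional pair $(X^n(j),Y^n)$ for $j\ne 1$ is \emph{not} distributed as $q^n(x^n)t^n(y^n)$ and one cannot apply the joint typicality lemma to $P(\cD_j\mid M=1)$ directly. To sidestep this, I would introduce the information density $i(s;y^n)$ of~\eqref{def:mutinf} and split on the event $\cA=\{i(S;Y^n)>n\gamma\}$ defined in~\eqref{def:A}, giving
\begin{align*}
P(\cE_2\mid M=1)\le P(\cA)+\sum_{j=2}^{\twonr}P(\cD_j\cap\cA^c\mid M=1);
\end{align*}
here $P(\cA\mid M=1)$ collapses to $P(\cA)$ because $\cA$ is a function of $(S,Y^n)$, which is independent of $M$ as observed right after~\eqref{Eq:Ensemble2}.

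The heart of the proof is the bound on a single $P(\cD_j\cap\cA^c\mid M=1)$ for $j\ne 1$. Conditioned on $S=s$ and $M=1$, the codeword $X^n(j)$ is drawn from $q_s$ independently of $(X^n(1),Y^n)$, so this probability equals $\sum_{(s,x^n,y^n)\in\cB}p_{S,Y^n}(s,y^n)\,q_s(x^n)$, where $\cB$ is the set of triples with $(x^n,y^n)\in\tset$ and $i(s;y^n)\le n\gamma$. On $\cB$ the bound on the information density is exactly the Radon--Nikodym inequality $p_{S,Y^n}(s,y^n)\le 2^{n\gamma}\,p_S(s)\,t^n(y^n)$, and this is the decoupling step that breaks the dependence on $s$. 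Substituting, enlarging the sum to $\cS\times\tset$, and using $\sum_s p_S(s)q_s(x^n)=q^n(x^n)$ from~\eqref{Eq:rsqs} reduces the bound to a standard joint typicality estimate which is $\eqdot 2^{-n(I(X;Y)-\gamma)}$, essentially as carried out in steps (a)--(d) preceding the lemma.

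Collecting the pieces yields $P(\cE_2\mid M=1)\le P(\cA)+2^{n(R-I(X;Y)+\gamma)}$. To conclude, the hypothesis $R<I(X;Y)-\gamma^*$ allows me to choose some $\gamma>\gamma^*$ with $R<I(X;Y)-\gamma$; for this $\gamma$ the definition~\eqref{Eq:gamma} of $\gamma^*$ forces $P(\cA)\to 0$, while the second term decays exponentially, and combined with the earlier bound on $P(\cE_1\mid M=1)$ this gives $P_{e,n}\to 0$. The main delicacy throughout is the change-of-measure step on $\cA^c$: it is what lets one trade joint independence of the codewords for the weaker conditional independence granted by~\eqref{Eq:RC2}, at the price of the rate penalty $\gamma^*$.
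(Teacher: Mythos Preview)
Your proposal is correct and follows essentially the same approach as the paper: the lemma is a summary of the discussion immediately preceding it, and you have reproduced that discussion faithfully, including the split on $\cA$, the change-of-measure step on $\cA^c$ via~\eqref{def:mutinf}, the reduction to~\eqref{Eq:rsqs} and the joint typicality lemma, and the final choice of $\gamma$ strictly between $\gamma^*$ and $I(X;Y)-R$.
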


\subsection{Parity-check constraints}\label{Sect:ParityCheck}
In this part, we continue the above discussion for the important special case of parity-check constraints.
For simplicity, we restrict the discussion to channels with binary input alphabets, $\cX =\{0,1\}$. 
We will identify $\cX$ with the binary field $\F_2$ and use vector space operations over $\F_2$ to define the code constraints.
The joint ensemble for the system will still be $(M,S,\cC,Y^n,\hat{M})$ with a probability assignment 
\eqref{Eq:Ensemble2}, except here we will consider a constraint function $\phi$ defined in terms of a parity-check matrix $H\in \F_2^{r\times n}$ with 
$r$ rows and $n$ columns. We leave $r$ as an arbitrary parameter, $0\le r\le n$, through the following analysis and discuss its effect on the results following the analysis.
We take the constraint set as $\cS = \F_2^{r}$ and for any $(s,x^n)\in \cS\times \cX^n$ define the constraint function as
\begin{equation}\label{Eq:Parity}
\phi(s,x^n) =\begin{cases} 1, & \text{if $s = x^nH^T$},\\
0, & \text{otherwise}.
\end{cases}
\end{equation}
Note that $\phi$ is symmetric with $\alpha_s = 1$ for every $s\in \cS$.
Also note that $\phi$ splits the set $\cX^n$ into cosets $\cK_s\defn\{x^n\in\cX^n: x^n H^T = s\}$ indexed by $s\in \cS$.
Each coset has $|\cK_s|= 2^{n-r}$ elements and $\cK_s = x_s^n + \cK_0$ where $x_s^n\in \cK_s$ is a coset representative for $\cK_s$ and $\cK_0$ denotes the coset for $s=0^r$.  

\begin{lemma}\label{Lemma:CM}
Let $\cA$ be as in \eqref{def:A} with $\gamma = \frac{1}{n}I(S;Y^n) + \epsilon$ for some $\epsilon >0$.
Then, for the parity-check code ensemble,
\begin{equation}\label{Eq:CM}
P(\cA) \le \exp\left(-n\frac{2\epsilon^2}{d}\right),
\end{equation}
where $d$ is a constant determined by $q(x)$ and $w(y|x)$.
\end{lemma}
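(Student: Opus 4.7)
My plan is to recognize $P(\cA) = P(i(S;Y^n) - I(S;Y^n) > n\epsilon)$ as a one-sided concentration statement and apply McDiarmid's bounded-differences inequality to $f(X^n,Y^n) \defn i(X^n H^T; Y^n)$ viewed as a function of the $n$ independent pairs $(X_1,Y_1),\ldots,(X_n,Y_n) \sim q(x)w(y|x)$. By \eqref{Eq:rsqs} these pairs are genuinely i.i.d., and $S = X^n H^T$ is a deterministic function of $X^n$, so this viewpoint is legitimate. Since $\E f = I(S;Y^n)$ by definition, a bound of the form $P(f - \E f > n\epsilon) \le \exp(-2n\epsilon^2/c^2)$ is precisely \eqref{Eq:CM} with $d = c^2$.

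The heart of the argument is thus to establish the bounded-differences property: there is a constant $c = c(q,w)$, independent of $n$, $r$, and $H$, such that replacing a single pair $(X_i, Y_i)$ with any $(X_i', Y_i')$ alters $f$ by at most $c$. Writing
$$i(s;y^n) = \log p_{S,Y^n}(s,y^n) - \log p_S(s) - \log t^n(y^n),$$
the third term obviously changes by at most $\log(t_{\max}/t_{\min})$, where $t_{\min},t_{\max}$ denote the extremes of $t$ over $\cY$. The first two terms are the nontrivial ones because they depend on the coset structure of $H$.

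For these, the key observation is that if replacing $X_i$ by $X_i'$ moves $S$ to $S' = S + (X_i' - X_i)h_i^T$ (with $h_i$ the $i$-th column of $H$), then $\tilde x^n \mapsto \tilde x^n + (X_i' - X_i)e_i$ is a bijection $\cK_S \to \cK_{S'}$ that alters only coordinate $i$. Pairing terms in the defining sums $p_{S,Y^n}(s,y^n) = \sum_{\tilde x^n \in \cK_s} q^n(\tilde x^n)w^n(y^n|\tilde x^n)$ and $p_S(s) = \sum_{\tilde x^n \in \cK_s} q^n(\tilde x^n)$ through this bijection, each per-term ratio collapses to a purely local factor of the form $q(z')w(y_i'|z')/\bigl(q(z)w(y_i|z)\bigr)$ with $z,z'\in\cX$, uniformly bounded above and below by constants depending only on $(q,w)$, assuming both are strictly positive on their alphabets. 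Since any ratio of sums is sandwiched between the minimum and maximum per-term ratio, each of the two logs changes by a bounded amount; adding the three contributions yields the desired $c$, and McDiarmid then delivers \eqref{Eq:CM} with $d = c^2$.

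The main obstacle is precisely this bounded-difference step: unlike the unconstrained case where $i(X^n;Y^n) = \sum_i i(X_i;Y_i)$ factors into a sum of i.i.d. bounded terms, here $i(S;Y^n)$ has an intricate joint dependence on all coordinates through the parity structure. The coset-bijection trick is what localises this dependence---flipping one $X_i$ merely relabels cosets and produces matching bounded log-factors in numerator and denominator simultaneously---so the resulting constant $d$ depends only on $q$ and $w$ and is genuinely insensitive to $H$ and $r$, exactly as the lemma claims.
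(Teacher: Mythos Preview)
Your proposal is correct and is essentially identical to the paper's proof: the paper also writes $i(S;Y^n)=f(X^n,Y^n)$, establishes the bounded-differences property via the same three-term decomposition $\log p_{S,Y^n}-\log p_S-\log t^n$ and the same coset-bijection (term-by-term) argument, and then invokes the Azuma--Hoeffding/McDiarmid inequality to obtain \eqref{Eq:CM} with $d=\bigl(\log_2(\beta_{q,w}\beta_q\beta_t)\bigr)^2$. The only cosmetic difference is that the paper phrases the bijection as $\cK_s=x_s^n+\cK_0$ rather than as an explicit coordinate flip, and it restricts to the support of $q(x)w(y|x)$ rather than assuming strict positivity.
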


\begin{proof}
Note that $i(S;Y^n)=f(X^n,Y^n)$ where $f(x^n,y^n) \defn i(x^nH^T;y^n)$. Writing $i(S;Y^n)$ in this way as a function of $(X^n,Y^n)$ is useful because the function $f$ is Lipschitz:
Let $(x^n,y^n)\in \cX^n\times \cY^n$ and $(\tilde{x}^n,\tilde{y}^n)\in \cX^n\times \cY^n$ be any two points such that
(a) $(x_i,y_i) \neq (\tx_i,\ty_i)$ for some $i\in [1:n]$ but $(x_j,y_j) = (\tx_j,\ty_j)$ for all $j\neq i$, $1\le j\le n$,
and (b) $q^n(x^n)w^n(y^n|x^n) >0$ and $q(\tx^n)w^n(\ty^n|\tx^n) >0$. 
We claim that 
\begin{equation}\label{Eq:Lipschitz}
\left| f(x^n,y^n)-f(\tx^n,\tx^n)\right| \le d_i,
\end{equation}
for some constant $d_i$ that depends only on the distributions $q(x)$ and $w(y|x)$. 

Assuming for a moment that the claim \eqref{Eq:Lipschitz} is true, the lemma follows from Azuma-Hoeffding inequality, specifically, from the form of this inequality as given in \cite[Corol.~5.2]{CM},
with $d=\frac{1}{n}\sum_{i=1}^n d_i^2$.
Therefore, it suffices to prove only \eqref{Eq:Lipschitz}, or equivalently,
$$
2^{-d_i} \le 2^{f(x^n,y^n)-f(\tx^n,\tx^n)}\le 2^{d_i}.
$$
To that end, we write
$$
2^{f(x^n,y^n)-f(\tx^n,\tx^n)} = \left(\frac{p_{S,Y^n}(s,y^n)}{p_{S,Y^n}(\ts,\ty^n)}\right)\left(\frac{p_S(\ts)}{p_S(s)}\right)\left(\frac{p_{Y^n}(\ty^n)}{p_{Y^n}(y^n)}\right),
$$
where we put for shorthand $s \defn x^nH^T$, $\ts \defn \tx^nH^T$.
Using the coset structure of the constraints, we have
\begin{align*}
p_{S,Y^n}(s,y^n) &= \sum_{\ox^n\in \cX^n} p_S(s) q_s(\,\ox^n) w^n(y^n|\ox^n) \\
& = \sum_{\ox^n\in \cX^n} \phi(s,\ox^n)q^n(\,\ox^n)w^n(y^n|\ox^n) \\
& =\sum_{\ox^n\in \cK_s} q^n(\,\ox^n)w^n(y^n|\ox^n) \\
& =\sum_{\ox^n\in\cK_0} q^n(x^n+\ox^n)w^n(y^n|x^n+\ox^n).
\end{align*}
Thus, we have
\begin{align*}
\frac{p_{S,Y^n}(s,y^n)}{p_{S,Y^n}(\ts,\ty^n)} = \frac{\sum_{\ox^n\in\cK_0} q^n(x^n+\ox^n)w^n(y^n|x^n+\ox^n)}{\sum_{\ox^n\in\cK_0} q^n(\tx^n+\ox^n)w^n(\ty^n|\tx^n+\ox^n)}.
\end{align*}
Now, term by term, we have the bound
\begin{align*}
\frac{q^n(x^n+\ox^n)w^n(y^n|x^n+\ox^n)}{q^n(\tx^n+\ox^n)w^n(\ty^n|\tx^n+\ox^n)} =\frac{q(x_i+\ox_i)w(y_i|x_i+\ox_i)}{q(\tx_i+\ox_i)w(\ty_i|\tx_i+\ox_i)}\le \beta_{q,w}
\end{align*}
where 
$$
\beta_{q,w} \defn \frac{\max\left\{q(x)w(y|x): (x,y)\in \supp(q(x)w(y|x))\right\}}{\min\left\{q(x)w(y|x): (x,y)\in \supp(q(x)w(y|x))\right\}},
$$
where ``$\supp$'' denotes the support of a distribution.
So,
$$
(\beta_{q,w})^{-1} \le \frac{p_{S,Y^n}(s,y^n)}{p_{S,Y^n}(\ts,\ty^n)}\le \beta_{q,w}.
$$
Using the same type of argument, we get
$$
(\beta_q)^{-1} \le \frac{p_{S}(\ts)}{p_{S}(s)}\le \beta_q, \qquad (\beta_t)^{-1} \le \frac{p_{Y^n}(\ty^n)}{p_{Y^n}(y^n)}\le \beta_t.
$$
where $\beta_q$ is defined as the ratio of $\max\{q(x): x\in \supp(q(x))\}$ to $\min\{q(x): x\in \supp(q(x))\}$
and $\beta_t$ as the ratio of $\max\{t(y): y\in \supp(t(y))\}$ to $\min\{t(y): y\in \supp(t(y))\}$.
Combining these, we obtain the proof of \eqref{Eq:Lipschitz} with 
$
d_i = \log_2 \left(\beta_{q, w}\beta_q \beta_{t}\right).
$
The lemma follows, with $d= \left(\log_2 \left(\beta_{q, w}\beta_q \beta_{t}\right)\right)^2$.
\end{proof}
This shows that $P(\cA)$ goes to zero exponentially in $n$ regardless of the size (number of rows $r$) and form of $H$;
it should be clear, however, that the specific form of $H$ affects the rate penalty $\frac{1}{n}I(S;Y^n)$.
To gain a more intuitive understanding of this issue, let us interpret $I(S;Y^n)$ as the average information leaked by the received word $Y^n$ about the constraint $S$ in a one shot transmission scenario where a codeword $X^n$ satisfying the constraint $\phi(S,X^n)=1$ is sent. 
From this perspective, we may expect that the larger the number of parity checks and the more sparse they are (involving fewer codeword digits), the larger will be the leakage.
As a trivial example, we have $H=I_n$ (the identity matrix) with $I(S;Y^n) = I(X^n;Y^n) = nI(X;Y)$, corresponding to maximum information leakage.
A non-trivial example in the same vein is Gallager's proof \cite[\S 3.8]{LDPC} that $I(S;Y^n)$ is bounded away from zero when $H$ is the parity-check matrix of a regular LDPC code of a given rate.
At the other extreme, we have the well-known fact that random parity-check codes achieve capacity, which {\sl a fortiori} implies that $I(S;Y^n)$ is typically $o(n)$.

\section{Polar parity-check matrices}\label{Sect:Polar}

In this part, we apply the results of Sect.~\ref{Sect:ParityCheck} to the situation where $H$ is a parity-check matrix derived from polar coding and show that there is no rate penalty in this case.
For brevity, we will refer to parity-check matrices obtained from polar coding as ``polar parity-check'' matrices.
We first give a brief description of polar codes; for details, we refer to \cite{ArikanIT2009}.
Let $F=\left[\begin{smallmatrix} 1 & 0 \\ 1 & 1 \end{smallmatrix}\right]$ and $G_\ell = F^{\otimes \ell}$ denote
the $\ell$th Kronecker power of $F$. Note that $G_\ell$ is an $n\times n$ matrix with $n=2^\ell$ and its inverse is itself, $G_\ell^{-1}= G_\ell$.
Polar codes are defined in terms of the mapping $x^n = u^n G_\ell$ where $x^n$ denotes the codeword and $u^n$ denotes the source word.
In polar coding we ``freeze'' a certain subset of coordinates of the source word $u^n$ and insert the data payload in the remaining portion of $u^n$.
To be specific, let $\cF\subset [1:n]$ denote the indices marking the frozen part of $u^n$ and let $u_{\cF}=(u_i:i\in \cF)$ denote the frozen part.
By convention, we set $u_{\cF} = s$ for some fixed pattern $s\in \cX^{|\cF|}$ and keep this part unchanged from one transmission to next, while we leave the other part $u_{\cF^c}$ free.
The parity-check matrix for polar codes can be derived as follows.
We begin with the definition that a word $x^n$ is a polar codeword iff $x^n = u^nG_\ell$ for some $u^n$ with $u_{\cF}=s$.
Using the inverse relation $u^n = x^nG_\ell^{-1}$, we obtain that $x^n$ is a codeword iff $s = \left(x^n G_\ell^{-1}\right)_{\cF}$.
Next, we observe that
$$
\left(x^n G_\ell^{-1}\right)_{\cF} = x^n \left(G_\ell^{-1}\right)_{\cF}
$$
where $\left(G_\ell^{-1}\right)_{\cF}$ denotes the submatrix of $G^{-1}_\ell$ obtained by taking the columns with indices in $\cF$.
Thus, we obtain a parity-check matrix for polar codes, namely,
\begin{equation}\label{Eq:Association}
H =  \left(\left(G_\ell^{-1}\right)_{\cF}\right)^T.
\end{equation}
Now, we consider Lemma~\ref{Lemma:CM} in connection with an ensemble $(S,X^n,Y^n)$ based on a polar parity-check matrix.
We annex to this ensemble the random vector $U^n\defn X^nG^{-1}_\ell$ that corresponds to the source word in polar coding
so that we have the relation
$$
S = \left(X^nG_\ell^{-1}\right)_{\cF} = U_{\cF}.
$$
We wish to show that if $\cF$ is chosen using the usual polar code design rules, then the rate penalty $I(S;Y^n)$ will be negligible.
The specific design rule we use here fixes a $\beta <1/2$ and selects the frozen set as
\begin{equation}\label{Eq:FrozenSet}
\cF = \left\{i\in [1:n]: H(U_i|Y^n,U^{i-1}) > 2^{-n^\beta}\right\}.
\end{equation}
Now, by standard facts about the entropy function, we have
\begin{align*}
I(U_\cF;Y^n) & \stackrel{(a)}{=} \sum_{i\in \cF} I(U_i;Y^n|U_{\cF_{i-1}})\\
 & = \sum_{i\in \cF} [H(U_i|U_{\cF_{i-1}}) - H(U_i|Y^n,U_{\cF_{i-1}})]\\
 & \le \sum_{i\in \cF} [1 - H(U_i|Y^n,U^{i-1})]\\
& \stackrel{(b)}{\le} |\cM| + \sum_{i\in \cH} 2^{-n^\beta}\\
& \stackrel{(c)}{\le} o(n) + n 2^{-n^\beta} = o(n)
\end{align*}
where in (a) we defined $\cF_{i-1}\defn \{j\in \cF: j \le i-1\}$,
in (b) split $\cF$ into  
$$
\cM = \left\{i\in [1:n]: 2^{-n^\beta} <H(U_i|Y^nU^{i-1})\le 1-2^{-n^\beta}\right\}
$$
and
$$
\cH = \left\{i\in [1:n]: H(U_i|Y^nU^{i-1})> 1- 2^{-n^\beta}\right\},
$$
and in (c) used polarization results \cite{ArikanISIT2010} to write the bound $|\cM| = o(n)$.
Thus, by Lemma~\ref{Lemma:Constraints} and Lemma~\ref{Lemma:CM}, we conclude that the rate penalty $I(S;Y^n)$ is $o(n)$ and $I(X;Y)$ is achievable using the polar parity-check ensemble.

The number of constraints imposed by polar parity-checks is $|\cF|$, which is $nH(X|Y)+o(n)$ \cite{ArikanISIT2010}. The dimensionality of the ensemble $X^n$ is reduced from $nH(X)+o(n)$ to $nI(X;Y)+o(n)$ by the polar parity-checks; this is the smallest possible dimensionality (to order $O(n)$) for an ensemble that achieves $I(X;Y)$. 

We refrained from calling the codes generated under polar parity-checks ``polar codes'' because there are major differences between the two classes of codes.
To discuss this further, let us refer to the polar parity-check codes of this paper as PPC codes and reserve the term ``polar code'' for ordinary polar codes as defined in \cite{ArikanIT2009}.
The results of this paper establish that PPC codes achieve $I(X;Y)$ with a probability of error that goes to zero exponentially in $n$, while for polar codes
that exponent is not better than $\sqrt{n}$ even under ML decoding.
The $\sqrt{n}$ exponent arises from the fact that the minimum distance of a code generated by a submatrix of $G_\ell$ cannot have a minimum distance better than $O(\sqrt{n})$ for any fixed non-zero code rate.
It must be that on average PPC codes have a minimum distance proportional to $n$; otherwise, their error exponent would not be proportional to $n$.
This significant increase in minimum distance can be attributed to random selection of codewords; a PPC code may be seen as an expurgated polar code. 
The expurgation removes the defects in the polar code; but it also destroys the linear structure in the code. 
In standard polar coding, the mapping from messages to codewords is a linear relation of the form $x^n = u^nG_\ell$, which can be implemented in complexity $O(n\log(n))$.
Under PPC coding, there is no linear relationship of this type between data bits and codewords; hence, one can no longer claim that the encoding complexity is $O(n\log(n))$. 
Thus, PPC codes show a gain in performance at the expense of giving up the low-complexity encoding properties of polar codes.
Clearly, similar remarks apply to the complexity of decoding. 

For PPC codes, achieving $I(X;Y)$ under an arbitrary target distribution $q(x)$ is no different than achieving it under a uniform $q(x)$. 
With polar codes, achieving $I(X;Y)$ for a non-uniform $q(x)$ is not a straightforward task; it requires extension of the standard method and employing common randomness between the encoder and decoder in order to shape the channel input distribution \cite{Honda}. With PPC codes, the shaping is built into the code selection procedure.

\section{Summary}\label{Sect:Remarks}
The main motivation for this work has been to develop a packing lemma for polar codes that would enable translation of proofs by standard packing lemmas to similar results for polar coding.
More work needs to be done to accomplish this broader goal.
The main contribution of the paper has been the development of a technique for analyzing the performance of a random code ensemble defined by a fixed parity-check matrix.
In this sense, the results may have relevance to a broader class of codes than polar codes. 
An interesting observation in the paper has been that the polar parity-check ensemble shows markedly better performance than the standard polar code of the same size.
A better understanding of this phenomenon may be useful in designing better polar codes.
 
\section*{Acknowledgment}
This work was supported in part by the European Commission in the framework of the FP7 Network of Excellence in
Wireless COMmunications NEWCOM\# (contract n.318306).


\begin{thebibliography}{1}
 
\bibitem{EGK} A. El Gamal and Y.-H. Kim, {\sl Network {I}nformation {T}heory.} Cambridge University Press, 2011.

\bibitem{CM} D. Dubhashi and A. Panconesi, {\sl Concentration of Measure for the Analysis of Randomised Algorithms.} Cambridge University Press, 2009.

\bibitem{LDPC} R. G. Gallager, {\sl Low Density Parity Check Codes.} Monograph, M.I.T. Press, 1963.


\bibitem{ArikanIT2009} E.~Ar{\i}kan, ``Channel polarization: A method for constructing capacity-achieving codes for symmetric binary-input memoryless channels,''   
{\em IEEE Trans. Inform. Theory},vol.~55, pp.~3051--3073, July 2009.

\bibitem{ArikanISIT2010} E. Ar{\i}kan, ``Source polarization,'' in {\em Proc. 2010 IEEE Int. Symp. Inform. Theory}, (Austin, TX),  pp. 899-903, 13-18 June 2010.

\bibitem{Honda} J. Honda and H. Yamamoto, ``Polar coding without alphabet extension for asymmetric channels,'' in 2012 IEEE International Symposium on Information Theory Proceedings (ISIT), 2012, pp. 2147-2151.



\end{thebibliography}
\end{document}